\newtheorem{theorem}{\textbf{Theorem}}
\newtheorem{corollary}{\textbf{Corollary}}
\newtheorem{remark*}{Remark}
\newcommand{\cycle}[1]{\mathcal C(#1)}
\newcommand{\VC}[1]{\mathcal{VC}(#1)}
\newcommand{\FAS}[1]{\mathcal{F\!\!AS}(#1)}
\renewcommand{\O}{\mathcal{O}}
\newcommand{\Poly}{{\mathsf{P}}}
\newcommand{\sPoly}{{\mathsf{\#P}}}
\newcommand{\sOptPoly}{{\mathsf{\#\!\cdot\!OptP[\log n]}}}
\newcommand{\NP}{{\mathsf{NP}}}
\newcommand{\redparsi}{\leq^p_\text{parsi}}
\newcommand{\redT}{\leq^p_T}
\newcommand{\countingpb}[3]{\fbox{\parbox{0.9\textwidth}{{\bf #1}\\{\it Input:} #2\\{\it Output:} #3}}}
\title{\vspace*{-1cm}On the complexity of counting feedback arc sets}
\author[$\dagger$]{K\'evin Perrot}
\affil[$\dagger$]{Universit\'e Publique, France.}
\date{}
\begin{document}
\setlist[itemize,enumerate]{nosep}
\maketitle

\begin{abstract}
  In this note we study the computational complexity of feedback arc set
  counting problems in directed graphs, highlighting some subtle yet common
  properties of
  counting classes. Counting the number of feedback arc sets of cardinality $k$
  and the total number of feedback arc sets are $\sPoly$-complete problems,
  while counting the number of minimum feedback arc sets is only proven to be
  $\sPoly$-hard. Indeed, this latter problem is $\sOptPoly$-complete, hence
  if it belongs to $\sPoly$ then $\Poly=\NP$.
\end{abstract}

\section{Introduction}
\label{s:intro}

Feedback arc sets are natural objects to consider in digraphs, and deciding,
given a digraph $G$ and an integer $k$, whether it contains a feedback arc set
of cardinality at most $k$ is one of Karp's 21 $\NP$-complete problems
\cite{k72} (with a reduction not parsimonious).
In this note we study the computational complexity of various 
feedback arc set counting problems. Section~\ref{s:prelim} gives the required
definitions, in Section~\ref{s:complete} we prove that counting the number of
feedback arc sets of cardinality $k$ and counting the total number of feedback
arc sets are $\sPoly$-complete problems, in Section~\ref{s:minimum} we study
the special case of counting the number of minimum feedback arc set (which is
not in $\sPoly$ unless $\Poly=\NP$), and Section~\ref{s:fvs} remarks that we
can derive identical results on counting feedback vertex sets.

\section{Preliminaries}
\label{s:prelim}

We denote $[n]$ the set of integers $\{1,\dots,n\}$. We call {\em graph} an
undirected graph $G=(V,E)$ with $E \subseteq \{ \{u,v\} \mid u,v \in V \}$, and
{\em digraph} a directed graph $G=(V,A)$ with $A \subseteq V \times V$.
We consider all our graphs and digraphs to be loopless. A {\em
vertex cover} $C \subseteq V$ of a graph $G=(V,E)$ is a subset of vertices
intersecting every edge of $G$, {\em i.e.} verifying $\forall e \in E: e \cap
C \neq \emptyset$. A
{\em cycle of length $k$} in a digraph $G=(V,A)$ is a $k$-tuple of arcs
$((u_1,v_1),(u_2,v_2),\dots,(u_k,v_k))$ such that $(u_i,v_i) \in A$ for all $i
\in [k]$, $v_i = u_{i+1}$ for all $i \in [k-1]$, and $u_1=v_k$. Let $\cycle{G}$
denote the set of cycles of $G$. If $\cycle{G}=\emptyset$ then $G$ is called
{\em acyclic}. A {\em feedback arc set} $F \subseteq A$ of a digraph $G=(V,A)$
is a subset of arcs intersecting every cycle of $G$, {\em i.e.} verifying
$\forall c \in \cycle{G}: c \cap F \neq \emptyset$. Let $\FAS{G}$
(resp. $\VC{G}$) denote the set of feedback arc sets (FAS)
(resp. vertex covers (VC)) of $G$.

Valiant defined the {\em class of counting problems
$\sPoly$} in \cite{v79,v79b}.
It is the class of functions counting the number of
certificates of decision problems in $\NP$, {\em i.e.} problems
of the form ``given $x$, compute $f(x)$'', where $f$ is
the number of accepting paths
of a nondeterministic Turing machine taking $x$ as input and
running in polynomial time. We denote
$f \redparsi g$ when there exists a {\em polynomial parsimonious reduction} from
counting problem $f$ on alphabet $\Sigma_f$
to counting problem $g$ on alphabet $\Sigma_g$, {\em
i.e.} a transformation $r:\Sigma_f^* \to \Sigma_g^*$
computable by a deterministic Turing machine running in polynomial time
such that $\forall x \in
\Sigma_f^*: f(x)=g(r(x))$. We denote $f \redT g$ when there exists a {\em polynomial
Turing reduction} from $f$ to $g$, {\em i.e.} a deterministic Turing machine 
computing $f$ with oracle $g$, and running in polynomial time (hence making a
polynomial number of calls to its oracle). Of course $f \redT g$
implies $f \redparsi g$.

$\sPoly$-hardness is defined using either polynomial parsimonious reductions or
polynomial Turing reductions (both reductions yield the same results).
Note that, in contrast,
Turing reductions are too strong for counting classes that are higher
in the polynomial counting hierarchy \cite{tw92}. Furthermore, 
$\sPoly$ is known to be closed under $\redparsi$, but this question is open for
$\redT$ and it is known to be equivalent to $\Poly=\NP$ (see
Section~\ref{s:minimum}).

The authors of \cite{hp09} have developed a remedy to the difficulty of classifying
tightly some problems in $\sPoly$ (related to the unknown closure of $\sPoly$
for $\redT$), in particular minimality and maximality counting problems, that we
will apply in Section~\ref{s:minimum}. It consists in the {\em counting class
$\sOptPoly$}, which is defined in terms of nondeterministic Turing machines 
where each accepting path outputs a number encoded in binary
(rejecting paths have no output).
We will call such machines {\em nondeterministic transducers}.
Then $\sOptPoly$ is the class of functions counting the number of accepting
paths outputting the minimal value (among all the values outputted by accepting paths), for some
nondeterministic transducer running in polynomial time and outputting
values whose binary encoding is of length $\O(\log n)$
(the definitions of \cite{hp09} are more general
in order to fit other levels of
the polynomial counting hierarchy
and other magnitudes of output length).
In this paper we will only use the notion of hardness for $\sOptPoly$ under
polynomial parsimonious reductions.

Our complexity results will be derived by reduction from
the following problems.\\[.5em]
\countingpb{Cardinality vertex cover (\#Card-VC)}
{A graph $G$ and an integer $k$.}
{$|\{ C \in \VC{G} \mid |C|=k \}|$.}

\begin{theorem}[{\cite[p.~169]{gj79}}]
  \label{th:card-vc}
  {\bf \#Card-VC} is $\sPoly$-complete.
\end{theorem}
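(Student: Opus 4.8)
The plan is to prove separately that \#Card-VC belongs to $\sPoly$ and is $\sPoly$-hard, and membership is the easy half. On input a graph $G=(V,E)$ together with an integer $k$, a nondeterministic Turing machine guesses bit by bit an arbitrary subset $C\subseteq V$ (one computation branch for each of the $2^{|V|}$ subsets), then checks in polynomial time both that $|C|=k$ and that $e\cap C\neq\emptyset$ for every $e\in E$, accepting exactly when both hold. Its number of accepting branches is then precisely $|\{C\in\VC{G}\mid |C|=k\}|$, so \#Card-VC $\in\sPoly$.

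For $\sPoly$-hardness I would reduce from \#3-SAT, the problem of counting satisfying assignments of a 3-CNF formula, which is $\sPoly$-complete (via the standard parsimonious reduction from \#SAT, the canonical $\sPoly$-complete problem obtained from Cook's theorem). The skeleton is the classical many-one reduction of 3-SAT to Vertex Cover: for a formula $\phi$ with variables $x_1,\dots,x_n$ and clauses $C_1,\dots,C_m$, build a graph with a truth-setting edge $\{u_i,\bar u_i\}$ per variable, a small clause gadget per clause, and communication edges joining each clause gadget to the vertices representing the literals of that clause; pick the target size $k$ so that any vertex cover of size $k$ is forced to take exactly one endpoint of each truth-setting edge --- encoding an assignment $\alpha$ --- and to cover each clause gadget with the minimum possible number of its vertices, which is feasible precisely when $\alpha$ satisfies the corresponding clause. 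Checking that $G$ has a vertex cover of size $k$ if and only if $\phi$ is satisfiable is routine.

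The point I expect to be the real obstacle is upgrading this skeleton to a \emph{parsimonious} reduction. With the textbook triangle clause gadget the map from satisfying assignments to size-$k$ vertex covers fails to be injective, since a clause that $\alpha$ satisfies via two or three of its literals can be covered in several ways; the construction then only preserves the weighted count $\sum_{\alpha\models\phi}\prod_{j}(\text{number of true literals of }C_j\text{ under }\alpha)$. One fix is to replace the triangle by a gadget that forces a canonical choice of a satisfied literal in each clause --- for instance an ordered cascade through the three literals whose unique cheapest cover picks the first true one --- which restores a bijection and hence a bona fide parsimonious reduction. Alternatively, since the preliminaries allow $\sPoly$-hardness to be established by polynomial Turing reductions, one may instead note that after inflating the gadgets by an integer parameter $s$ the weighted count becomes a polynomial in $s$ whose values can be read off from oracle calls to \#Card-VC, and from which the number of satisfying assignments of $\phi$ is recovered by inverting a Vandermonde system. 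Either way, combined with membership this yields that \#Card-VC is $\sPoly$-complete.
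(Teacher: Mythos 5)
First, note that the paper does not prove this statement at all: Theorem~\ref{th:card-vc} is imported verbatim from Garey and Johnson, so there is no internal proof to compare yours against, and what follows is an assessment of your argument on its own terms. Your membership argument is fine, and you correctly identify the real difficulty in the hardness half: the textbook reduction from 3-SAT to Vertex Cover with triangle clause gadgets is not parsimonious, since it computes $\sum_{\alpha\models\phi}\prod_j t_j(\alpha)$, where $t_j(\alpha)$ is the number of true literals of clause $j$ under $\alpha$. Diagnosing this is the right instinct.

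The gap is that neither of your two proposed repairs is actually carried out, and each hides a genuine piece of work at exactly the point where the proof could fail. For the ``ordered cascade'' you assert, but do not exhibit, a clause gadget with the property that for every nonempty subset of true literals there is a \emph{unique} minimum local cover; that gadget and the verification of uniqueness \emph{are} the parsimonious reduction, so without them the first route is an unsupported claim. For the interpolation route, replicating each triangle $s$ times (with target size $n+2ms$) yields the count $\sum_{\alpha\models\phi}\bigl(\prod_j t_j(\alpha)\bigr)^s=\sum_p N_p\,p^s$, where $p$ ranges over the values taken by $\prod_j t_j(\alpha)$; a Vandermonde inversion is only a polynomial-time Turing reduction if there are polynomially many distinct bases $p$, and you never check this. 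It happens to hold --- each $t_j\in\{1,2,3\}$, so every $p$ has the form $2^a3^b$ with $a+b\le m$, giving at most $\binom{m+2}{2}$ candidate bases, all distinct positive integers whose powers $p^s$ have polynomially many bits --- but this observation is the crux of that route and must appear in the proof, together with the (easy but omitted) check that every cover of size exactly $n+2ms$ decomposes as one endpoint per variable edge plus two vertices per triangle copy. Either repair can be completed, so your outline is salvageable, but as written the hardness argument stops exactly where the difficulty begins.
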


\noindent
  \countingpb{Minimum cardinality vertex cover (\#Minimum-VC)}
{A graph $G$.}
{$|\{ C \in \VC{G} \mid |C|=m \}|$ with $m = \min\{ |C| \mid C \in \VC{G} \}$.}

\begin{theorem}[{\cite[by identity reduction from Problem~4]{pb83}}]
  \label{th:minimum-vc-spoly}
  {\bf \#Minimum-VC} is $\sPoly$-hard.
\end{theorem}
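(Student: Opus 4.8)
The plan is to derive the $\sPoly$-hardness of \#Minimum-VC from the $\sPoly$-completeness of \#Card-VC (Theorem~\ref{th:card-vc}) through two reductions, using that $\sPoly$-hardness is also defined with respect to polynomial Turing reductions (Section~\ref{s:prelim}). It is convenient to pass through the auxiliary problem \#Total-VC: given a graph $G$, output $|\VC{G}|$.

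\emph{Step 1: $\#\mathrm{Total}\text{-}\mathrm{VC}\redparsi\#\mathrm{Minimum}\text{-}\mathrm{VC}$.} Given $G=(V,E)$ with $|V|=n$, let $G^+$ be obtained by attaching to every $v\in V$ a fresh pendant vertex $p_v$, i.e. adding the $n$ pairwise disjoint edges $\{v,p_v\}$. Those edges form a matching, so every vertex cover of $G^+$ has size $\geq n$, while $V$ itself is a vertex cover of size exactly $n$; hence the minimum vertex cover number of $G^+$ is $n$. A vertex cover $C$ of $G^+$ of size $n$ must contain exactly one endpoint of each edge $\{v,p_v\}$; setting $S:=C\cap V$ one gets $C=S\cup\{p_v : v\notin S\}$ and checks $S\in\VC{G}$, and conversely every such set is a minimum vertex cover of $G^+$. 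Thus the minimum vertex covers of $G^+$ are in bijection with $\VC{G}$, so $G\mapsto G^+$ is a polynomial parsimonious reduction.

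\emph{Step 2: $\#\mathrm{Card}\text{-}\mathrm{VC}\redT\#\mathrm{Total}\text{-}\mathrm{VC}$ by interpolation.} Given $(G,k)$ with $|V|=n$, write $N_j:=|\{C\in\VC{G}:|C|=j\}|$; the goal is $N_k$. For $t\in\{0,\dots,n\}$ let $G^{(t)}$ be $G$ with $t$ fresh pendant vertices attached to each vertex. In a vertex cover of $G^{(t)}$, a vertex that lies in the cover leaves its $t$ pendants free while a vertex outside the cover forces all $t$ of them in, and the trace of the cover on $V$ is an arbitrary element of $\VC{G}$; hence $|\VC{G^{(t)}}|=\sum_{S\in\VC{G}}2^{t|S|}=\sum_{j=0}^{n}N_j\,(2^{j})^{t}$. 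Querying the oracle on $G^{(0)},\dots,G^{(n)}$ yields $n+1$ linear equations in the unknowns $N_0,\dots,N_n$ whose matrix is Vandermonde in the distinct nodes $2^{0},\dots,2^{n}$, hence invertible; solving it recovers $N_k$. All constructed instances have size polynomial in $|G|$, the matrix entries have $\O(n^2)$ bits, and each $N_j\le 2^n$, so the whole computation runs in polynomial time.

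Since a parsimonious reduction is in particular a Turing reduction, composing the two steps gives $\#\mathrm{Card}\text{-}\mathrm{VC}\redT\#\mathrm{Minimum}\text{-}\mathrm{VC}$, and Theorem~\ref{th:card-vc} then shows that \#Minimum-VC is $\sPoly$-hard. The points needing care are: in Step~1, that for $G^+$ the \emph{minimum} vertex covers are exactly the inclusion-wise \emph{minimal} ones, which is precisely what the matching lower bound (minimum cover number $=n$) buys; and in Step~2, that the Vandermonde inversion stays polynomial-time; neither is a genuine obstacle. (Alternatively, the statement is immediate from \cite{pb83}: its Problem~4 is \#Minimum-VC up to the complementation bijection between vertex covers and independent sets, i.e. an ``identity'' reduction.)
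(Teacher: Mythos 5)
Your proof is correct, but it takes a genuinely different route: the paper does not prove this theorem at all, it simply imports it from \cite{pb83} (Problem~4 there is counting minimum cardinality vertex covers, equivalently maximum independent sets under the standard complementation bijection $C \mapsto V\setminus C$, so the reduction really is the identity). You instead give a self-contained derivation from Theorem~\ref{th:card-vc}: a pendant-vertex gadget showing that the minimum vertex covers of $G^+$ (minimum cover number forced to $|V|$ by the pendant matching) are in bijection with \emph{all} of $\VC{G}$, followed by a Vandermonde interpolation with nodes $2^0,\dots,2^n$ to recover the cardinality-stratified counts from total counts. Both steps check out: the size-$n$ covers of $G^+$ are exactly the sets $S\cup\{p_v : v\notin S\}$ for $S\in\VC{G}$, the identity $|\VC{G^{(t)}}|=\sum_j N_j (2^j)^t$ is right, the nodes are distinct so the system is solvable, and the entries have polynomially many bits; composing a parsimonious with a Turing reduction is legitimate since the paper defines $\sPoly$-hardness with respect to $\redT$ as well. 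What your argument buys is independence from \cite{pb83} and a nice echo of the interpolation technique the paper itself uses for Theorem~\ref{th:fas}; what it costs is that you only obtain a \emph{Turing} reduction from {\bf \#Card-VC}, whereas the cited result gives $\sPoly$-hardness of {\bf \#Minimum-VC} under parsimonious reductions, which is what the paper later needs to transfer $\sOptPoly$-hardness (Theorem~\ref{th:minimum-fas-opt} relies on the chain {\bf \#Minimum-VC} $\redparsi$ {\bf \#Minimum-FAS} starting from a parsimoniously hard problem). So your proof establishes the literal statement as the paper uses it for Theorem~\ref{th:minimum-fas}, but it is not a drop-in replacement for the role the citation plays in the $\sOptPoly$ part of the paper. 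Two cosmetic points: your aside equating minimum and minimal covers of $G^+$ is true but unnecessary, and note that the paper's sentence ``$f\redT g$ implies $f\redparsi g$'' has the implication backwards --- your direction (parsimonious implies Turing) is the correct one.
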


\begin{theorem}[{\cite[Theorem~11]{hp09}}]
  \label{th:minimum-vc-soptpoly}
  {\bf \#Minimum-VC} is $\sOptPoly$-complete.
\end{theorem}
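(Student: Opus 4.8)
The plan is to prove the two halves of the completeness claim separately: membership is routine, while hardness calls for a reduction that respects counting and optimization at the same time.

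\textbf{Membership in $\sOptPoly$.} Given a graph $G=(V,E)$ whose encoding has length $n$ (so $|V|\le n$), take the nondeterministic transducer that guesses a subset $C\subseteq V$, checks in polynomial time whether $C$ is a vertex cover, rejects if not, and otherwise accepts while writing $|C|$ in binary on its output tape. The output has length $\O(\log n)$ since $|C|\le|V|\le n$, and the whole set $V$ is a vertex cover so at least one path accepts. The accepting paths are in bijection with $\VC{G}$, hence the accepting paths of minimal output value correspond exactly to the minimum-cardinality vertex covers, and their number is the value of \textbf{\#Minimum-VC} on $G$. Hence \textbf{\#Minimum-VC} belongs to $\sOptPoly$.

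\textbf{Hardness for $\sOptPoly$.} Let $f\in\sOptPoly$ be witnessed by a nondeterministic transducer $M$ running in time $p(n)$ whose outputs, on inputs of length $n$, have binary length at most $t=\O(\log n)$; put $W=2^t$, which is polynomial in $n$. I construct in polynomial time a graph $r(x)$ with $f(x)=$\textbf{\#Minimum-VC}$(r(x))$ in three steps. First, a (parsimonious) Cook--Levin transformation yields a 3-CNF formula $\varphi_x$ whose satisfying assignments are in bijection with the accepting computations of $M$ on $x$, and which contains designated variables $b_0,\dots,b_{t-1}$ whose values in the assignment corresponding to a path $\pi$ spell the binary encoding of the value output along $\pi$; for a satisfying assignment $\alpha$ set $w(\alpha)=\sum_{i\,:\,\alpha(b_i)=1}2^i$, so that $w(\alpha)$ is the output of the corresponding path. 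Second, a \emph{parsimonious} polynomial-time reduction from \#3SAT to counting vertex covers turns $\varphi_x$ into a graph $G_x$ all of whose minimum vertex covers have the same cardinality $s$ and are in bijection with the satisfying assignments of $\varphi_x$; here one must discard the usual triangle gadget, which overcounts assignments having clauses with several true literals, in favour of a genuinely parsimonious construction (for instance via circuit-simulation gadgets with unique minimum-cover extensions, or another standard parsimonious construction), which can in addition be arranged so that for every $b_i$ there is a vertex $v_i$ of $G_x$ lying in the cover associated to $\alpha$ exactly when $\alpha(b_i)=1$.

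\textbf{Encoding the objective into cardinality.} Third, make $G_x$ into a weighted instance by assigning weight $W$ to every vertex except the $v_i$, and weight $W+2^i$ to $v_i$, and then remove the weights by the standard gadget that replaces a weight-$w$ vertex by $w$ pairwise non-adjacent copies sharing its neighbourhood; since $W$ and all $W+2^i$ are polynomial and $G_x$ has polynomially many vertices, the graph $r(x)$ so obtained has polynomial size and is computable in polynomial time. In the weighted graph the cover associated to a satisfying assignment $\alpha$ has weight $W\cdot s+w(\alpha)$, while any vertex cover of $G_x$ of cardinality strictly greater than $s$ has weight at least $W(s+1)=W\cdot s+W>W\cdot s+(W-1)\ge W\cdot s+w(\alpha)$ for every $\alpha$ (using $w(\alpha)\le W-1$); therefore the minimum-weight vertex covers of the weighted $G_x$ are precisely those associated to satisfying assignments of minimum objective, and through the copy gadget these biject with the minimum-cardinality vertex covers of $r(x)$. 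Composing all the bijections gives $f(x)=$\textbf{\#Minimum-VC}$(r(x))$; as $f$ was arbitrary this shows \textbf{\#Minimum-VC} is $\sOptPoly$-hard, and with the membership above it is $\sOptPoly$-complete. The main obstacle is the combination of parsimony with optimality: one needs a vertex-cover reduction that is parsimonious \emph{and} exposes a single Boolean-valued vertex per output bit — the familiar triangle-based reduction fails the first requirement — and one must choose the weights so that inflating only the $v_i$ cannot make a vertex cover of larger cardinality become minimum-weight, which is exactly what dictates the uniform base weight $W=2^t$ alongside the weights $W+2^i$ and keeps the offset $W\cdot s$ independent of $\alpha$, so that minimizing cover size coincides with minimizing the encoded output value.
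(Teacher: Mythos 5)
First, note that the paper does not prove this statement at all: Theorem~\ref{th:minimum-vc-soptpoly} is imported as a black box from \cite[Theorem~11]{hp09}, so there is no in-paper proof to compare yours against. Judged on its own terms, your membership argument is correct and is the standard one, and your overall architecture for hardness (generic transducer $\to$ Cook--Levin formula with designated output-bit variables $b_0,\dots,b_{t-1}$ $\to$ vertex cover with weights $W+2^i$ on the bit vertices and $W=2^t$ elsewhere $\to$ weight removal by replacing a weight-$w$ vertex with an independent set of $w$ copies) is the right template; the threshold inequality $W(s+1)>Ws+w(\alpha)$ with $w(\alpha)\le W-1$ does exactly what you want, and the blow-up gadget does preserve the count of minimum covers since every minimal cover of the blown-up graph is a union of whole copy-classes.

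There is, however, a genuine gap at the step you yourself flag and then wave away. The entire hardness proof rests on the existence of a polynomial-time reduction from \#3SAT to counting \emph{minimum} vertex covers that is parsimonious \emph{and} exposes one designated vertex per variable $b_i$. You correctly observe that the textbook triangle gadget fails (a minimum cover can omit any one of the triangle vertices pointing at a true literal, so satisfying assignments are counted with multiplicity $\prod_j(\text{number of true literals in clause }j)$), but you replace it only with the phrase ``circuit-simulation gadgets with unique minimum-cover extensions, or another standard parsimonious construction.'' That is precisely the technically hard content of the theorem, not a routine citation: one must actually exhibit a clause (or gate) gadget whose minimum extension is \emph{unique} for every satisfying pattern of its external literals and strictly larger for the unsatisfying pattern, or else reroute through an intermediate problem (as \cite{hp09} does, via a cardinality-weighted satisfiability problem) where parsimony at the optimum is easier to arrange. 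Until such a gadget is supplied, the reduction is not defined. A secondary point you should also address: if the transducer $M$ has no accepting path then $f(x)=0$, whereas every graph has at least one minimum vertex cover, so no parsimonious reduction can be correct on such inputs; you need to invoke (or stipulate) the convention that the machines defining $\sOptPoly$ always have at least one accepting path.
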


\section{$\sPoly$-complete feedback arc set problems}
\label{s:complete}

\countingpb{Cardinality feedback arc set (\#Card-FAS)}
{A digraph $G$ and an integer $k$.}
{$|\{ F \in \FAS{G} \mid |F|=k \}|$.}\\[.5em]
\countingpb{Feedback arc set (\#FAS)}
{A digraph $G$.}
{$|\FAS{G}|$.}\\[.5em]

The first reduction adapts the classical construction from Karp \cite{k72}.

\begin{theorem}
  \label{th:card-fas}
  {\bf \#Card-VC} $\redT$ {\bf \#Card-FAS},
  and {\bf \#Card-FAS} is $\sPoly$-complete.
\end{theorem}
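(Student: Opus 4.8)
The plan is to adapt Karp's Vertex Cover $\to$ Feedback Arc Set construction and then \emph{amplify} it so that bounded-size feedback arc sets are forced to reflect vertex covers, and finally to extract the vertex cover counts by inverting a triangular linear system assembled from several oracle calls.

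Given a \textbf{\#Card-VC} instance $(G,k)$ with $G=(V,E)$ (we may assume $k\le|V|$, the answer being $0$ otherwise), put $M=|V|+1$ and build a digraph $G''$: for each $v\in V$ take vertices $v_1,v_2$ and a \emph{vertex arc} $(v_1,v_2)$; for each edge $\{u,v\}\in E$ and each $j\in[M]$ take two directed paths of length two, one from $u_2$ to $v_1$ and one from $v_2$ to $u_1$, through fresh internal vertices (so $G''$ stays loopless and has no parallel arcs). This is a polynomial-time construction, and the number $T$ of \emph{non-vertex} arcs of $G''$ equals $4M|E|$.

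The structural core is the claim that any $F\in\FAS{G''}$ with $|F|\le|V|$ contains, for every edge $\{u,v\}$, the arc $(u_1,u_2)$ or the arc $(v_1,v_2)$; equivalently $C(F):=\{v:(v_1,v_2)\in F\}\in\VC{G}$. Indeed, if neither vertex arc of some $\{u,v\}$ lies in $F$, then $F$ must meet each of the $M^2$ cycles traversing $v_1\to v_2$, some $v_2$–$u_1$ path, $u_1\to u_2$, and some $u_2$–$v_1$ path; since these paths are internally disjoint, a covering argument forces $F$ to contain an arc in all $M$ paths of one of the two families, so $|F|\ge M>|V|$, a contradiction. Conversely, $\{(v_1,v_2):v\in C\}$ is a feedback arc set of $G''$ for every $C\in\VC{G}$, because each directed cycle of $G''$ projects onto a closed walk of $G$, one of whose edges is covered by $C$. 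Consequently the feedback arc sets of $G''$ of size $i\le|V|$ are exactly the sets $\{(v_1,v_2):v\in C\}\cup S$ with $C\in\VC{G}$ and $S$ an arbitrary $(i-|C|)$-subset of non-vertex arcs, and this decomposition is unique. Writing $a_i=|\{F\in\FAS{G''}:|F|=i\}|$ (the oracle values) and $v_i=|\{C\in\VC{G}:|C|=i\}|$, we obtain
\[
  a_i=\sum_{j=0}^{i}v_j\binom{T}{i-j}\qquad\text{for }0\le i\le|V|,
\]
a lower-triangular system with unit diagonal since $\binom{T}{0}=1$. Hence $v_k$ is recovered from $a_0,\dots,a_k$ by integer back-substitution in polynomial time, yielding a polynomial Turing reduction from \textbf{\#Card-VC} to \textbf{\#Card-FAS}.

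For membership, a nondeterministic machine guessing $F\subseteq A$ with $|F|=k$ and accepting iff deleting $F$ from the input digraph leaves an acyclic digraph (testable in polynomial time by topological sorting) has exactly $|\{F\in\FAS{G}:|F|=k\}|$ accepting computations, so \textbf{\#Card-FAS} belongs to $\sPoly$. Together with Theorem~\ref{th:card-vc} and the closure of $\sPoly$-hardness under polynomial Turing reductions, this gives $\sPoly$-completeness. I expect the crux to be precisely the failure of parsimoniousness of the naive construction — a small feedback arc set may legitimately use ``edge'' arcs — which is what forces both the $M$-fold amplification (to pin down the structure of bounded-size feedback arc sets) and the multi-query binomial inversion rather than a single parsimonious reduction.
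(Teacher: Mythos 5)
Your proposal is correct and follows essentially the same route as the paper: the paper's gadget $G'(\ell)$ is your $G''$ (vertex arcs plus $\ell$ parallel length-two paths per edge direction, with $\ell=k+1$ rather than $|V|+1$, which makes no difference), and the paper inverts the same triangular binomial system $F(m+i)=\sum_j \binom{4\ell|E|}{i-j}C(m+j)$ via the same sequence of oracle calls. Your covering argument for why a feedback arc set of size at most $|V|$ must contain a vertex arc for every edge is simply a more explicit version of the step the paper states without detail.
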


\begin{proof}
  {\bf \#Card-FAS} is in $\sPoly$ since given $G=(V,A)$ and $k$,
  one can guess nondeterministically a subset $F \subseteq A$ of size $k$,
  and then check in polynomial time that the digraph $(V,A\setminus F)$
  is acyclic.

  \medskip

  For the $\sPoly$-hardness, as claimed we construct a polynomial Turing
  reduction from {\bf \#Card-VC} which is $\sPoly$-hard (Theorem~\ref{th:card-vc}).
  Given an instance $G=(V,E)$ and $k$ of {\bf \#Card-VC}, we consider 
  an arbitrary order $\prec$ on $V$ and the
  digraph $G'(\ell)=(V'(\ell),A'(\ell))$ with
  $$
    V'(\ell)= \{ v_i \mid v \in V \text{ and } i \in \{0,1\} \} \cup
    \{ e_{i,j} \mid e \in E \text{ and } i \in \{0,1\} \text{ and } j \in [\ell] \},
  $$
  $$
    \begin{array}{r}
      A'(\ell)=\{ (v_0,v_1) \mid v \in V \} \cup \Big\{ (u_1,\{u,v\}_{0,j}),(\{u,v\}_{0,j},v_0),(v_1,\{u,v\}_{1,j}),(\{u,v\}_{1,j},u_0) \mid\\ \{u,v\} \in E \text{ and } u \prec v \text{ and } j \in [\ell] \Big\},
    \end{array}
  $$
  for $\ell=k+1$ (the construction is illustrated on Figure~\ref{fig:card-fas}).
  \begin{figure}
    \centerline{\begin{tikzpicture}
  \def\ns{.4} 
  \def\as{.3} 
  \tikzstyle{every node}=[draw,circle,inner sep=2pt]
  \begin{scope}[shift={(-6.5,0)}]
    \node (a) at (1,0) {$a$};
    \node (b) at (0,-1) {$b$};
    \node (c) at (-1,0) {$c$};
    \node (d) at (0,1) {$d$};
    \draw (a) to (b);
    \draw (b) to (c);
    \draw (c) to (d);
    \draw (d) to (a);
    \node[draw=none] at (-1.7,0) {$G=$};
  \end{scope}
  \node[shift={(-\ns,0)}] (a0) at (0:2) {$a_0$};
  \node[shift={(\ns,0)}] (a1) at (0:2) {$a_1$};
  \node[shift={(0,-\ns)}] (b0) at (-90:2) {$b_0$};
  \node[shift={(0,\ns)}] (b1) at (-90:2) {$b_1$};
  \node[shift={(\ns,0)}] (c0) at (180:2) {$c_0$};
  \node[shift={(-\ns,0)}] (c1) at (180:2) {$c_1$};
  \node[shift={(0,\ns)}] (d0) at (90:2) {$d_0$};
  \node[shift={(0,-\ns)}] (d1) at (90:2) {$d_1$};
  \foreach \x in {1,2,3}{
    \node (ab\x) at (-45:2+\as*\x) {};
    \node (ba\x) at (-45:2-\as*\x) {};
    \node (bc\x) at (180+45:2-\as*\x) {};
    \node (cb\x) at (180+45:2+\as*\x) {};
    \node (cd\x) at (90+45:2+\as*\x) {};
    \node (dc\x) at (90+45:2-\as*\x) {};
    \node (da\x) at (45:2-\as*\x) {};
    \node (ad\x) at (45:2+\as*\x) {};
  }
  \foreach \v in {a,b,c,d}
    \draw[->] (\v0) to (\v1);
  \foreach \v/\vv in {a/b,b/c,c/d,d/a}
    \foreach \x in {1,2,3}{
      \draw[->] (\v1) to (\v\vv\x);
      \draw[->] (\v\vv\x) to (\vv0);
      \draw[->] (\vv1) to (\vv\v\x);
      \draw[->] (\vv\v\x) to (\v0);
    }
  \draw[rotate=-45,dashed] (2.6,0) ellipse (.6 and .4);
  \node[shift={(.5,0)},right,draw=none] at (-45:2.5) {\scriptsize $\{a,b\}_{0,i}$ for $i \in [3]$};
  \node[draw=none] at (-3.5,0) {$G'(3)=$};
  \node[draw=none] at (0,0) {\scriptsize $a \prec b \prec c \prec d$};
\end{tikzpicture}}
    \caption{Illustration of the construction $G'(\ell)$ for $\ell=3$ in the
    proof of Theorem~\ref{th:card-fas}.}
    \label{fig:card-fas}
  \end{figure}
  First, note that $k$ is encoded in binary, but since $k \leq |V|$ the 
  construction is polynomial in size. 
  The idea is to have, for each edge of $G$,
  a large set of cycles in $G'(\ell)$.
  In $G'(\ell)$ there are $4\ell|E|$ arcs
  corresponding to edges of $G$ that are ``strictly less interesting''
  (when $\ell > 1$)\footnote{If $\ell=1$ then $k=0$ and the answer to 
  the {\bf \#Card-VC} instance is trivial.}
  to construct an
  FAS than the $|V|$ arcs corresponding to vertices of $G$,
  because any cycle cut by some former edge can also be cut by some latter edge. 
  As a consequence, there is a one-to-one correspondence between the
  minimum VC of $G$ and the minimum FAS of $G'(\ell)$.
  Let $C(\kappa)$ be the answer to {\bf \#Card-VC} on $G$ for some integer $\kappa$,
  let $F(k')$ be the answer to {\bf \#Card-FAS} on $G'(\ell)$ for 
  some integer $k'$, and let
  $$m=\min\{ |C| \mid C \in \VC{G} \}=\{ |F| \mid F \in \FAS{G'(\ell)} \}.$$
  When $\ell > k' \geq m$, an FAS of size $k'$ of $G'(\ell)$ can be seen as:
  \begin{itemize}
    \item $\kappa$ arcs corresponding to vertices of $G$ forming a vertex cover
      for some $k' \geq \kappa \geq m$,
    \item $k'-\kappa$ arcs corresponding to edges of $G$ (among the $4\ell|E|$),
  \end{itemize}
  because with $\ell>k'$ it is not possible to intersect all cycles
  corresponding to an edge of $G$ with only arcs corresponding to edges of $G$.
  We can deduce the following relations between $C(\kappa)$ and $F(k')$:
  $$
    \begin{array}{l}
      F(0)=C(0)=0\\
      \dots\\
      F(m-1)=C(m-1)=0\\
      F(m)=C(m)\\
      F(m+1)=C(m+1)+{4\ell|E| \choose 1}C(m)\\
      F(m+2)=C(m+2)+{4\ell|E| \choose 1}C(m+1)+{4\ell|E| \choose 2}C(m)\\
      \dots\\
      F(m+i)=C(m+i)+\sum_{j=0}^{i-1} {4\ell|E| \choose i-j} C(m+j).
    \end{array}
  $$
  It is therefore possible to compute inductively $C(\kappa)$ from oracle calls to
  $F(1),\dots,F(\kappa)$ (corresponding to {\bf \#Card-FAS} instances
  $G'(\ell)$ with integers $k'=1,\dots,\kappa$),
  and eventually compute $C(k)$ after a polynomial time
  since all calls are done for $k'\leq k \leq |V|$.
\end{proof}

The second reduction employs the Vandermonde matrix method
from \cite{pb83,v79}.

\begin{theorem}
  \label{th:fas}
  {\bf \#Card-FAS} $\redT$ {\bf \#FAS},
  and {\bf \#FAS} is $\sPoly$-complete.
\end{theorem}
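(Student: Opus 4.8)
The plan is to first observe that \textbf{\#FAS} lies in $\sPoly$ for exactly the same reason as \textbf{\#Card-FAS}: on input $G=(V,A)$, a nondeterministic machine guesses a subset $F\subseteq A$ and accepts iff $(V,A\setminus F)$ is acyclic, a polynomial-time test; its accepting paths are precisely the feedback arc sets of $G$. For the hardness I would give a polynomial Turing reduction from \textbf{\#Card-FAS} (which is $\sPoly$-complete by Theorem~\ref{th:card-fas}) that recovers the whole \emph{feedback-arc-set polynomial} $Q_G(z)=\sum_{k\ge 0}F(k)\,z^k$ of the input digraph $G$, where $F(k)=|\{F\in\FAS{G}\mid |F|=k\}|$; the coefficient of $z^k$ then answers \textbf{\#Card-FAS}, the answer being trivially $0$ when the queried $k$ exceeds $m=|A|$.

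To reconstruct $Q_G$ I would use the Vandermonde-matrix technique of \cite{pb83,v79}. For a parameter $\ell\ge 1$, let $G_\ell$ be obtained from $G$ by subdividing each arc $a=(u,v)$ into a directed path $u\to w_{a,1}\to\cdots\to w_{a,\ell}\to v$ through $\ell$ fresh vertices. Since every new vertex has in-degree and out-degree $1$, the cycles of $G_\ell$ correspond bijectively to those of $G$ (each arc of a cycle of $G$ being traversed through its gadget-path), so a set $F'$ of arcs of $G_\ell$ is a feedback arc set of $G_\ell$ if and only if the set of original arcs whose gadget-path meets $F'$ is a feedback arc set of $G$. Counting by this ``touched'' set: for a fixed $T\in\FAS{G}$ there are exactly $(2^{\ell+1}-1)^{|T|}$ sets $F'$ whose touched set equals $T$ — each of the $|T|$ gadgets must receive at least one of its $\ell+1$ arcs while the remaining $m-|T|$ gadgets must receive none — hence
$$
  |\FAS{G_\ell}|=\sum_{T\in\FAS{G}}(2^{\ell+1}-1)^{|T|}=\sum_{k=0}^{m}F(k)\,(2^{\ell+1}-1)^k=Q_G\!\left(2^{\ell+1}-1\right).
$$

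Then I would query the \textbf{\#FAS} oracle on $G_1,\dots,G_{m+1}$ — each of size $O(m^2)$, hence constructible in polynomial time — obtaining the values of the degree-at-most-$m$ polynomial $Q_G$ at the $m+1$ pairwise distinct points $2^2-1,\,2^3-1,\,\dots,\,2^{m+2}-1$, and solve the associated Vandermonde linear system over $\mathbb{Q}$ to recover $F(0),\dots,F(m)$. All integers involved have polynomially many bits and the interpolation runs in polynomial time, so this is a bona fide polynomial Turing reduction; combined with membership in $\sPoly$ and transitivity of $\redT$, it yields that \textbf{\#FAS} is $\sPoly$-complete. The main point needing care — and the one I would argue in detail — is the equivalence between feedback arc sets of $G_\ell$ and of $G$, namely that subdivision creates no spurious cycles and that ``meeting a gadget-path anywhere'' is the correct event to track; once that is nailed down, the counting identity and the Vandermonde interpolation are routine.
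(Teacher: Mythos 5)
Your proposal is correct and follows essentially the same route as the paper: subdivide each arc into a directed path, observe that the feedback arc sets of the subdivided digraph partition according to the set of original arcs whose gadget-path is touched (with $(2^{\ell+1}-1)^{|T|}$ preimages of each $T\in\FAS{G}$), and recover the coefficients $F(0),\dots,F(m)$ by inverting a Vandermonde system built from polynomially many oracle calls. The only difference from the paper's construction $H'(\ell)$ is a harmless reindexing (you insert $\ell$ fresh vertices, hence $\ell+1$ arcs per gadget, where the paper uses $\ell-1$ vertices and $\ell$ arcs), which does not affect the argument.
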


\begin{proof}
  Again {\bf \#FAS} is in $\sPoly$ since given $G=(V,A)$,
  one can guess nondeterministically a subset $F \subseteq A$,
  and then check in polynomial time that the digraph $(V,A\setminus F)$
  is acyclic.

  \medskip

  For the $\sPoly$-hardness, as claimed we construct a polynomial Turing
  reduction from {\bf \#Card-FAS} which is $\sPoly$-hard (Theorem~\ref{th:card-fas}).
  Given an instance $G=(V,A)$ and $k$ of {\bf \#Card-FAS}, we consider 
  the family of digraphs $H'(\ell)=(V'(\ell),A'(\ell))$ with
  $$
    V'(\ell)= V \cup \{ a_{i} \mid a \in A \text{ and } i \in [\ell-1] \},
  $$
  $$
    A'(\ell)= \{ (u,a_1),(a_1,a_2),\dots,(a_{\ell-2},a_{\ell-1}),(a_{\ell-1},v) \mid a=(u,v) \in A \} 
  $$
  (the construction is illustrated on Figure~\ref{fig:fas}).
  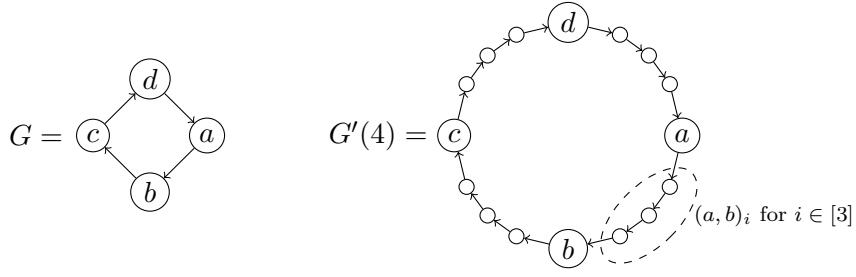
\begin{figure}
    \centerline{\begin{tikzpicture}
  \def\vs{.75} 
  \def\ns{1.5} 
  \def\as{18} 
  \tikzstyle{every node}=[draw,circle,inner sep=2pt]
  \begin{scope}[shift={(-5.5,0)}]
    \node (a) at (\vs,0) {$a$};
    \node (b) at (0,-\vs) {$b$};
    \node (c) at (-\vs,0) {$c$};
    \node (d) at (0,\vs) {$d$};
    \draw[->] (a) to (b);
    \draw[->] (b) to (c);
    \draw[->] (c) to (d);
    \draw[->] (d) to (a);
    \node[draw=none] at (-1.5,0) {$G=$};
  \end{scope}
  \node (a) at (\ns,0) {$a$};
  \node (b) at (0,-\ns) {$b$};
  \node (c) at (-\ns,0) {$c$};
  \node (d) at (0,\ns) {$d$};
  \foreach \v/\d in {a/-45,b/180+45,c/90+45,d/45}{
    \node (\v1) at (\d+\as:\ns) {};
    \node (\v2) at (\d:\ns) {};
    \node (\v3) at (\d-\as:\ns) {};
  }
  \foreach \v/\vv in {a/b,b/c,c/d,d/a}{
    \draw[->] (\v) to (\v1);
    \draw[->] (\v1) to (\v2);
    \draw[->] (\v2) to (\v3);
    \draw[->] (\v3) to (\vv);
  }
  \draw[rotate=-45,dashed] (\ns,0) ellipse (.4 and .8);
  \node[shift={(.5,0)},right,draw=none] at (-45:\ns) {\scriptsize $(a,b)_i$ for $i \in [3]$};
  \node[draw=none] at (-2.5,0) {$G'(4)=$};
\end{tikzpicture}}
    \caption{Illustration of the construction $H'(\ell)$ for $\ell=4$ in the
    proof of Theorem~\ref{th:fas}.}
    \label{fig:fas}
  \end{figure}
  The idea is that an arc $a=(u,v)$ of $G$ is replaced with a path 
  of $\ell$ arcs in $H'(\ell)$ that we will denote
  $$
    P_a=\{(u,a_1),(a_1,a_2),\dots,(a_{\ell-2},a_{\ell-1}),(a_{\ell-1},v)\}.
  $$
  Hence for each FAS $F$ of $G$ there corresponds
  a family $\Omega(F)$ of FAS of $H'(\ell)$, with elements of the form
  $\bigcup_{a \in A} S'_a$ where
  $$
    \begin{array}{l}
      S'_a \cap P_a \neq \emptyset \text{ if } a \in F\\
      S'_a \cap P_a = \emptyset \text{ if } a \notin F.
    \end{array}
  $$
  The family $\Omega(F)$ consists of $(2^\ell-1)^{|F|}$ FAS, and the families
  $\{\Omega(F) \mid F \in \FAS{G} \}$ partition $\FAS{H'(\ell)}$.
  The number of FAS of $H'(\ell)$ is therefore
  $$
    \Gamma(\ell)=\sum_{i=0}^{|A|} F_i(G) (2^\ell-1)^i
  $$
  where $F_i(G)$ is the number of FAS of $G$ of cardinality
  $i \in \{0,\dots,|A|\}$, {\em i.e.}
  $$
    F_i(G)=|\{ F \in \FAS{G} \mid |F|=i \}|.
  $$
  The $(|A|+1)\times(|A|+1)$ matrix $M=(M_{\ell i})$ with entries
  $M_{\ell i}=(2^\ell-1)^i \text{ for } \ell,i \in \{0,\dots,|A|\}$
  is Vandermonde\footnote{A matrix (or its transpose)
  is Vandermonde when its coefficient
  on the $i^\text{th}$ row and $j^\text{th}$ column
  is of the form $\mu_i^j$ for some real numbers $\mu_i$
  (indices start with $0$ hence coefficients equal 1 on the first column).}
  with bases distinct for all $\ell$,
  as a consequence it is nonsingular and from $\Gamma(\ell)$ for
  $\ell \in \{0,\dots,|A|\}$ we can compute $F_i(G)$ for
  $i \in \{0,\dots,|A|\}$ in polynomial time
  (see \cite[Lemma page~781]{pb83}).
\end{proof}

Let us also mention the following variant.\\[.5em]
\countingpb{Minimal feedback arc set (\#Minimal-FAS)}
{A digraph $G$.}
{$|\{ F \in \FAS{G} \mid \forall F' \subsetneq F : F' \notin \FAS{G} \}|$.}\\[.5em]
It is proven in \cite{ss02} that {\bf \#Minimal-FAS} is $\sPoly$-complete, with
a parsimonious reduction from counting the number of
acyclic orientations of an undirected
graph, itself proven to be $\sPoly$-hard in \cite{l86}. {\bf \#Minimal-FAS}
belongs to $\sPoly$ since the set of feedback arc sets is obviously
closed by arc addition, hence
one can check in polynomial time
that for any $a \in F$ the digraph $(V,A\setminus (F
\setminus \{a\}))$ is not acyclic.

\begin{theorem}[{\cite[Corollary~1]{ss02}}]
  \label{th:minimal-fas}
  {\bf \#Minimal-FAS} is $\sPoly$-complete.
\end{theorem}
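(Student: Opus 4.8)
The plan is to prove membership in $\sPoly$ and $\sPoly$-hardness separately, obtaining the latter through a parsimonious reduction; this reconstructs the argument of \cite{ss02}. For membership, which is already sketched above, a nondeterministic machine guesses a subset $F \subseteq A$, then verifies in polynomial time that $(V,A\setminus F)$ is acyclic (so $F \in \FAS{G}$) and that for every arc $a \in F$ the digraph $(V,A\setminus(F\setminus\{a\}))$ contains a cycle. Since $\FAS{G}$ is upward closed under arc addition, the latter condition is exactly minimality of $F$, and the $|F|+1$ acyclicity tests are polynomial, so {\bf \#Minimal-FAS} $\in \sPoly$.

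For hardness I would reduce from the problem of counting acyclic orientations of an undirected graph, which is $\sPoly$-hard by \cite{l86}. Given $G=(V,E)$, build in polynomial time the digraph $\widehat G=(V,\widehat A)$ in which every edge $\{u,v\}$ is replaced by the two opposite arcs $(u,v)$ and $(v,u)$, so that each edge contributes a directed $2$-cycle (a ``digon''). I claim the map $F \mapsto \widehat A\setminus F$ is a bijection from the minimal feedback arc sets of $\widehat G$ to the acyclic orientations of $G$; since this map is clearly invertible ($O \mapsto \widehat A\setminus O$), establishing the claim yields $g(G)=h(\widehat G)$ where $g$ counts acyclic orientations and $h={}${\bf \#Minimal-FAS}, i.e.\ a parsimonious reduction, and combined with membership this gives $\sPoly$-completeness.

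To prove the claim, first observe that any $F \in \FAS{\widehat G}$ must contain at least one arc of each digon, in order to intersect the corresponding $2$-cycle; hence $\widehat A\setminus F$ contains at most one arc per edge, i.e.\ is a partial orientation of $G$, which is acyclic exactly when $F\in\FAS{\widehat G}$. If moreover $F$ contains \emph{exactly} one arc per digon, then $\widehat A\setminus F$ is a full orientation of $G$, it is acyclic since $F$ is a FAS, and $F$ is minimal because deleting any arc from $F$ restores a complete digon, hence a $2$-cycle. The main obstacle is the converse implication: a minimal FAS of $\widehat G$ cannot contain \emph{both} arcs of some digon. For this I would argue that if $(u,v),(v,u)\in F$, then the partial orientation $\widehat A\setminus F$, being acyclic, cannot simultaneously contain a directed path from $u$ to $v$ and one from $v$ to $u$; therefore at least one of $(u,v),(v,u)$ can be added to $\widehat A\setminus F$ without creating a cycle, which means the corresponding proper subset of $F$ is still a feedback arc set of $\widehat G$, contradicting minimality of $F$. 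This pins down the minimal feedback arc sets of $\widehat G$ as precisely the complements of acyclic orientations of $G$, completing the reduction and the proof.
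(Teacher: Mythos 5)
Your proposal is correct and follows essentially the same route as the paper: the membership argument via upward closure of $\FAS{G}$ under arc addition is exactly the one sketched in the text, and the hardness part is a faithful reconstruction of the parsimonious reduction from counting acyclic orientations that the paper delegates to \cite{ss02}. Your digon construction and the complementation bijection (including the key observation that a minimal FAS cannot contain both arcs of a digon, since acyclicity forbids paths in both directions between $u$ and $v$) correctly fill in the details behind that citation.
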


\section{Counting minimum feedback arc sets}
\label{s:minimum}

\countingpb{Minimum feedback arc set (\#Minimum-FAS)}
{A digraph $G$.}
{$|\{ F \in \FAS{G} \mid |F|=m \}|$
with $m=\min\{ |F| \mid F \in \FAS{G} \}$.}\\[.5em]

In contrast to counting the number of minimal objects, counting the number of
minimum objects reveals some subtle facts about counting complexity classes. It
is indeed not an obvious task at all to check in polynomial time whether a
feedback arc set $F$ is of minimum size (intuitively this would require to
compute $m$ in polynomial time, but deciding if $m \leq k$ for a given $k$
is well known to be
$\NP$-complete \cite{k72}).
Fortunately the literature on counting problems provides appropriate tools
to characterize the complexity of {\bf \#Minimum-FAS}.

From standard techniques we derive the following.

\begin{theorem}
  \label{th:minimum-fas}
  {\bf \#Minimum-VC} $\redparsi$ {\bf \#Minimum-FAS},\\\
  {\bf \#Minimum-FAS} is $\sPoly$-hard and $\sPoly$-easy.
\end{theorem}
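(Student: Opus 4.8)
The plan is to prove the three assertions in turn, the parsimonious reduction being the main work. For \textbf{\#Minimum-VC} $\redparsi$ \textbf{\#Minimum-FAS} I would reuse the construction $G'(\ell)$ from the proof of Theorem~\ref{th:card-fas}, now with a \emph{fixed} value (a \textbf{\#Minimum-VC} instance carries no integer $k$, and any constant $\ell\geq 2$ works, e.g.\ $\ell=2$). Given a graph $G=(V,E)$ the reduction outputs $G'(2)$, whose size is polynomial in $|G|$, and the claim is that $C\mapsto F_C:=\{(v_0,v_1)\mid v\in C\}$ is a bijection from the minimum vertex covers of $G$ onto the minimum feedback arc sets of $G'(2)$; this makes the reduction parsimonious, and then $\sPoly$-hardness of \textbf{\#Minimum-FAS} is immediate from Theorem~\ref{th:minimum-vc-spoly}.

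To set up the bijection I would first describe the cycles of $G'(\ell)$: a vertex $x_0$ has $(x_0,x_1)$ as its only out-arc, $x_1$'s out-arcs all enter an edge-gadget of an edge incident to $x$, and each gadget arc leads (possibly after one more gadget arc) to the other endpoint $y_0$; consequently every cycle of $G'(\ell)$ visits some $x_0$, hence traverses at least one arc of the form $(x_0,x_1)$, and cycles correspond to closed walks in $G$. From this I get the upper bound: for any vertex cover $C$, any closed walk of $G$ uses an edge both of whose endpoints it visits and which $C$ covers, so $F_C$ is a feedback arc set of $G'(\ell)$ of size $|C|$; thus the minimum FAS size $m'$ of $G'(\ell)$ is at most the minimum VC size $m$ of $G$. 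For the matching lower bound (and rigidity), given any FAS $F$ let $D=\{v\mid (v_0,v_1)\in F\}$ and let $E'$ be the set of edges of $G$ with neither endpoint in $D$. For each $e=\{u,v\}\in E'$ the set $F$ must hit all $\ell^2$ six-cycles $u_0\to u_1\to e_{0,j}\to v_0\to v_1\to e_{1,j'}\to u_0$ using only arcs of $e$'s gadget (the two incident arcs $(u_0,u_1),(v_0,v_1)$ are absent); killing every level on at least one of the two sides forces $\geq\ell$ such arcs, and gadgets of distinct edges are arc-disjoint, so $|F|\geq |D|+\ell|E'|$. Since $D$ together with one endpoint of each edge of $E'$ is a vertex cover, $m\leq|D|+|E'|$, whence $|F|\geq|D|+\ell\,(m-|D|)=\ell m-(\ell-1)|D|\geq m$ for $\ell\geq 2$, with equality only if $|D|=m$, $E'=\emptyset$ and $F$ contains no gadget arc, i.e.\ $F=F_D$ with $D$ a minimum vertex cover. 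Combined with the upper bound this gives $m'=m$ and the announced bijection.

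For the ``$\sPoly$-easy'' part I would exhibit a polynomial-time Turing reduction \textbf{\#Minimum-FAS} $\redT$ \textbf{\#Card-FAS}; since \textbf{\#Card-FAS} $\in\sPoly$ by Theorem~\ref{th:card-fas}, this suffices. Given $G=(V,A)$, feedback arc sets are closed under adding arcs and $A$ itself is one, so \textbf{\#Card-FAS}$(G,k)>0$ exactly on the interval $k\in[m,|A|]$, where $m$ is the minimum FAS size; hence $m$ is found by a binary search using $\O(\log|A|)$ oracle calls, and one final call returns \textbf{\#Card-FAS}$(G,m)=$ \textbf{\#Minimum-FAS}$(G)$.

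The step I expect to be the real obstacle is the rigidity argument above — showing a \emph{minimum} FAS of $G'(\ell)$ cannot trade a vertex-arc for gadget-arcs — which hinges on the two combinatorial facts that hitting all $\ell^2$ six-cycles of a single edge-gadget without its incident vertex-arcs costs at least $\ell$ arcs, and that the gadgets of distinct edges are pairwise arc-disjoint; everything else is bookkeeping.
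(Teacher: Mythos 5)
Your proof is correct and follows essentially the same route as the paper: the hardness part uses the same gadget $G'(2)$ and the same bijection between minimum vertex covers and minimum feedback arc sets (you merely supply the rigidity argument that the paper only asserts), and the easiness part is the same Turing reduction to \textbf{\#Card-FAS}, with a binary search for $m$ in place of the paper's linear scan. No substantive differences or gaps.
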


\begin{proof}
  The construction $G'(2)$ from the proof
  of Theorem~\ref{th:card-fas} is a polynomial parsimonious
  reduction from {\bf \#Minimal-VC} to {\bf \#Minimum-FAS},
  which is therefore $\sPoly$-hard by
  Theorem~\ref{th:minimum-vc-spoly}. Indeed, a minimum FAS in $G'(2)$ contains
  no arc corresponding to edges of $G$, hence there is a one-to-one
  correspondence between minimum FAS in $G'(2)$ and minimum VC in $G$.

  \medskip

  For the $\sPoly$-easiness, there is a straightforward
  polynomial Turing reduction to {\bf \#Card-FAS} which is in $\sPoly$
  (Theorem~\ref{th:card-fas}): one can simply call the oracle 
  with the digraph $G=(V,A)$ for $k=0,1,\dots$ until
  the first time its answer is not zero (which happens at least when $k=|V|$),
  this is the number of minimum FAS.
\end{proof}

Although the class $\sPoly$ is closed under polynomial parsimonious reductions
(meaning that $A \redparsi B$ and $B \in \sPoly$ implies $A \in \sPoly$), the
closure of $\sPoly$ for polynomial Turing reductions is an open 
question\footnote{Note that the analogous question regarding the
closure of $\NP$ under 
polynomial Turing reduction is also open, while $\NP$ is known to be closed
under polynomial many-one reductions.}
(and
is equivalent to $\Poly=\NP$, as we will see). For
more on closure properties of $\sPoly$ see
\cite{fr96,ottw96,oh93,tw92}
The authors of \cite{hp09} defined the complexity class $\sOptPoly$ (and
analogous classes for problems higher in the polynomial counting hierarchy) for
which minimality (or maximality) problems such as {\bf \#Minimum-FAS} are
provably\footnote{More precisely with a proof of {\em appropriate difficulty}, since
all these are complete if $\Poly=\NP$\dots} complete.

\begin{theorem}
  \label{th:minimum-fas-opt}
  {\bf \#Minimum-FAS} is $\sOptPoly$-complete.
\end{theorem}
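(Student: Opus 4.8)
The plan is to establish both membership in $\sOptPoly$ and $\sOptPoly$-hardness, the latter via a parsimonious reduction from {\bf \#Minimum-VC}, which is $\sOptPoly$-complete by Theorem~\ref{th:minimum-vc-soptpoly}. For membership, I would exhibit a nondeterministic transducer that, on input a digraph $G=(V,A)$, guesses a subset $F\subseteq A$, checks in polynomial time that $(V,A\setminus F)$ is acyclic, accepts iff this holds, and outputs the cardinality $|F|$ in binary. Since $|F|\le|A|$ the output has length $\O(\log n)$ as required, and the number of accepting paths achieving the minimum output value is exactly the number of minimum feedback arc sets. This direction should be routine.

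For hardness, the natural move is to reuse the construction $G'(2)$ from the proof of Theorem~\ref{th:card-fas}, which was already shown in the proof of Theorem~\ref{th:minimum-fas} to be a polynomial parsimonious reduction from {\bf \#Minimum-VC} to {\bf \#Minimum-FAS}: a minimum FAS of $G'(2)$ uses no arc corresponding to an edge of $G$ (because with $\ell=2$ each edge gadget has its cycles cheaper to break via the vertex arc), so minimum FAS of $G'(2)$ correspond bijectively to minimum VC of $G$. The point to be careful about is that $\sOptPoly$-hardness is taken with respect to parsimonious reductions only (as stated in the preliminaries), so it suffices that this same map $r$ is parsimonious and polynomial-time computable — both already established — and then $\sOptPoly$-completeness of {\bf \#Minimum-VC} transfers to {\bf \#Minimum-FAS}.

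I expect the main obstacle, such as it is, to be a bookkeeping point rather than a conceptual one: one must make sure the definition of $\sOptPoly$-hardness used by \cite{hp09} composes correctly with parsimonious reductions, i.e. that if $f\redparsi g$ and $f$ is $\sOptPoly$-hard then $g$ is $\sOptPoly$-hard — this is immediate from transitivity of $\redparsi$ provided the class of reductions used to define hardness is closed under composition with $\redparsi$, which it is here since hardness for $\sOptPoly$ is itself defined via $\redparsi$. A secondary check is that $G'(2)$ genuinely forbids edge-arcs in a minimum FAS, but this is exactly the $\ell>1$ analysis already carried out in the proof of Theorem~\ref{th:card-fas} (specialized to $k=1$, $\ell=2$), so I would simply cite it.

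\begin{proof}
  Membership in $\sOptPoly$: consider the nondeterministic transducer which, on input a digraph $G=(V,A)$, nondeterministically guesses a subset $F\subseteq A$, checks in polynomial time whether $(V,A\setminus F)$ is acyclic, accepts if and only if it is, and in that case outputs $|F|$ encoded in binary. Since $|F|\le|A|$, the output length is $\O(\log n)$. The accepting paths correspond exactly to the feedback arc sets of $G$, and those outputting the minimal value are exactly the minimum feedback arc sets, so {\bf \#Minimum-FAS} is the associated $\sOptPoly$ function.

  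$\sOptPoly$-hardness: by Theorem~\ref{th:minimum-vc-soptpoly}, {\bf \#Minimum-VC} is $\sOptPoly$-complete, hence $\sOptPoly$-hard for $\redparsi$. By the proof of Theorem~\ref{th:minimum-fas}, the construction $G'(2)$ (from the proof of Theorem~\ref{th:card-fas}, specialized to $k=1$, $\ell=2$) yields a polynomial-time computable parsimonious reduction from {\bf \#Minimum-VC} to {\bf \#Minimum-FAS}: a minimum FAS of $G'(2)$ contains no arc corresponding to an edge of $G$, so such FAS are in one-to-one correspondence with the minimum vertex covers of $G$. Since hardness for $\sOptPoly$ is defined with respect to $\redparsi$ and $\redparsi$ is transitive, it follows that {\bf \#Minimum-FAS} is $\sOptPoly$-hard. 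Together with membership, {\bf \#Minimum-FAS} is $\sOptPoly$-complete.
\end{proof}
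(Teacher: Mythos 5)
Your proposal is correct and follows essentially the same route as the paper: membership via a nondeterministic transducer that guesses $F$, checks acyclicity, and outputs $|F|$ in binary, and hardness by reusing the parsimonious reduction $G'(2)$ from {\bf \#Minimum-VC} already established in Theorem~\ref{th:minimum-fas}, combined with Theorem~\ref{th:minimum-vc-soptpoly}. No substantive difference from the paper's argument.
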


\begin{proof}
  {\bf \#Minimum-FAS} is in $\sOptPoly$ since
  given $G=(V,A)$, one can guess nondeterministically a subset $F \subseteq A$,
  then check in polynomial time that the digraph $(V,A\setminus F)$ is acyclic.
  If it not acyclic then reject, if it is acyclic then output the size of $F$
  encoded in binary and accept.

  \medskip

  We already proved in Theorem~\ref{th:minimum-fas} that
  {\bf \#Minimum-VC} $\redparsi$ {\bf \#Minimum-FAS},
  hence from Theorem~\ref{th:minimum-vc-soptpoly} {\bf \#Minimum-FAS}
  is $\sOptPoly$-hard.
\end{proof}

Thanks to a result of \cite{hp09} we can nicely complement
Theorem~\ref{th:minimum-fas}.

\begin{corollary}
  \label{coro:minimum-fas}
  If {\bf \#Minimum-FAS} is in $\sPoly$ then $\Poly=\NP$.
\end{corollary}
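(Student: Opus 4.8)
The plan is to derive Corollary~\ref{coro:minimum-fas} by combining the two characterizations already established for {\bf \#Minimum-FAS}: its $\sOptPoly$-completeness (Theorem~\ref{th:minimum-fas-opt}) and the fact that it is $\sPoly$-hard under $\redparsi$ (Theorem~\ref{th:minimum-fas}). The key observation is that membership of {\bf \#Minimum-FAS} in $\sPoly$ would collapse $\sOptPoly$ into $\sPoly$: since {\bf \#Minimum-FAS} is $\sOptPoly$-complete, every function in $\sOptPoly$ parsimoniously reduces to it, and $\sPoly$ is closed under $\redparsi$, so $\sOptPoly \subseteq \sPoly$ (and the reverse inclusion is immediate from the definitions, so in fact $\sOptPoly = \sPoly$). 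Thus the whole argument reduces to showing that $\sOptPoly = \sPoly$ (or even just $\sOptPoly \subseteq \sPoly$) implies $\Poly = \NP$.

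For that last implication I would invoke the result of \cite{hp09} directly: that paper introduces $\sOptPoly$ precisely to capture minimality counting problems, and establishes that $\sOptPoly \subseteq \sPoly$ is equivalent to (or at least implies) $\Poly = \NP$. If one wants a self-contained sketch rather than a citation, the idea is the following: given any language $L \in \NP$ with an associated polynomial-time nondeterministic transducer, one can build an $\sOptPoly$ function whose value encodes enough information about the optimum to let a deterministic polynomial-time machine decide $L$; but if this function were in $\sPoly$, one could use it (via the standard fact that a $\sPoly$ oracle lets you extract individual bits, or via the interpolation-style tricks already used in Theorems~\ref{th:card-fas} and~\ref{th:fas}) to decide $L$ in deterministic polynomial time. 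In particular, one could decide whether a digraph has a feedback arc set of size at most $k$ — an $\NP$-complete problem by \cite{k72} — by querying the hypothetical $\sPoly$ algorithm for {\bf \#Minimum-FAS} together with the oracle-based computation of $m$ sketched in the proof of Theorem~\ref{th:minimum-fas}, yielding $\Poly = \NP$.

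The main obstacle is that the cleanest route genuinely depends on the general equivalence "$\sOptPoly \subseteq \sPoly \iff \Poly = \NP$" from \cite{hp09}, which is the substantive input; the rest is bookkeeping about completeness and closure. One should be careful to state the hypothesis in the contrapositive-friendly form and to note that closure of $\sPoly$ under $\redparsi$ (not $\redT$) is what is being used — this is the point emphasized in Section~\ref{s:minimum}, namely that $\redT$ is too strong here. So I would write the proof in roughly three sentences: assume {\bf \#Minimum-FAS} $\in \sPoly$; by Theorem~\ref{th:minimum-fas-opt} and $\redparsi$-closure of $\sPoly$, conclude $\sOptPoly \subseteq \sPoly$; then cite \cite{hp09} (or the short argument above using the $\NP$-completeness of feedback arc set from \cite{k72}) to conclude $\Poly = \NP$.
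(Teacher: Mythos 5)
Your proposal is correct and follows essentially the same route as the paper: assume {\bf \#Minimum-FAS} $\in \sPoly$, use its $\sOptPoly$-completeness together with closure of $\sPoly$ under $\redparsi$ to collapse $\sOptPoly$ to $\sPoly$, and then invoke the result of \cite{hp09} (the paper cites Theorem~9 there, for $k=0$) to conclude $\Poly=\NP$. Your additional self-contained sketch is not needed, and your emphasis on using $\redparsi$-closure rather than $\redT$-closure is exactly the right point of care.
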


\begin{proof}
  If {\bf \#Minimum-FAS} is in $\sPoly$ then $\sOptPoly=\sPoly$
  from Theorems~\ref{th:minimum-fas} and \ref{th:minimum-fas-opt}
  (it straightforwardly holds that $\sPoly \subseteq \sOptPoly$),
  hence the result follows from a direct application of
  \cite[Theorem~9]{hp09} for $k=0$.
\end{proof}

\section{Complexity of counting feedback vertex sets}
\label{s:fvs}

There is a straightforward polynomial parsimonious reduction
from counting feedback arc sets
to counting feedback vertex sets in a digraph.
A {\em feedback vertex set} (FVS) of a digraph $G$ is a subset
of vertices intersecting every cycle of $G$ ({\em i.e.} containing,
for every cycle of $G$, at least one vertex from its arcs).
The reduction, as noted in \cite{ss02}, consists in considering the
{\em line digraph} of $G=(V,A)$, denoted $L(G)=(V',A')$ and defined as
$$
  V'=A \text{, and } ((u,v),(v,w)) \in A' \text{ for all }
  u,v,w \in V \text{ such that } (u,v),(v,w) \in A.
$$
Indeed, there is one-to-one correspondence between the cycles of $G$
and $L(G)$, and a one-to-one correspondence between the FAS of $G$
and the FVS of $L(G)$.
As a consequence ($\sPoly$ and $\sOptPoly$ being closed for $\redparsi$),
all the results of this note also apply to counting the number of
feedback vertex sets.

It is worth remembering that counting the number of minimum feedback vertex sets is
known to be $\sPoly$-hard even when restricted to
planar digraphs \cite{hmrs98}.

\section{Acknowledgments}

The author is thankful to Madhav Marathe for helpful correspondence.
The author is affiliated to
Universit\'e C\^ote d'Azur, CNRS, I3S, UMR 7271, Sophia Antipolis, France,
and Aix Marseille Universit\'e, Universit\'e de Toulon, CNRS, LIS, UMR 7020, Marseille, France.
This work received financial support from
the \emph{Young Researcher} project ANR-18-CE40-0002-01 ``FANs'',
the project ECOS-CONICYT C16E01,
the project STIC AmSud CoDANet 19-STIC-03 (Campus France 43478PD).

\bibliographystyle{plain}
\bibliography{biblio}

\end{document}